\newcolumntype{P}[1]{>{\centering\arraybackslash}p{#1}}
\newcommand{\proto}{\emph{SpreadMeNot}}
\newtheorem{theorem}{Theorem}
\newtheorem{definition}{Definition}
\acrodef{API}{Application Program Interface}
\acrodef{BLE}{Bluetooth Low Energy}
\acrodef{CA}{Certification Authority}
\acrodef{SDR}{Software Defined Radio}
\acrodef{DoS}{Denial of Service}
\acrodef{DH}{Diffie Hellman}
\acrodef{ECC}{Elliptic Curve Cryptography}
\acrodef{ECDH}{Elliptic Curve Diffie Hellman}
\acrodef{ECDLP}{Elliptic Curve Discrete Logarithm Problem}
\acrodef{ECDSA}{Elliptic Curve Digital Signature Algorithm}
\acrodef{GNSS}{Global Navigation Satellite System}
\acrodef{GPS}{Global Positioning System}
\acrodef{IoT}{Internet of Things}
\acrodef{KDF}{Key Derivation Function}
\acrodef{HKDF}{HMAC Key Derivation Function}
\acrodef{MEO}{Medium Earth Orbit}
\acrodef{MITM}{Man-in-the-Middle}
\acrodef{PKC}{Public Key Cryptography}
\acrodef{PKI}{Public Key Infrastructure}
\acrodef{RF}{Radio Frequency}
\acrodef{RFID}{Radio Frequency Identification}
\acrodef{RSS}{Received Signal Strength}
\acrodef{SNR}{Signal-to-Noise-Ratio}
\acrodef{TLS}{Transport Layer Security}
\acrodef{PEPP-PT}{Pan-European Privacy-Preserving Proximity Tracing}
\acrodef{TCN}{Temporary Contact Numbers}
\acrodef{UHF}{Ultra High Frequency}
\acrodef{TDMA}{Time Division Multiple Access}
\acrodef{GFSK}{Gaussian Frequency Shift Keying}
\acrodef{DPSK}{Differential Phase Shift Keying}
\acrodef{FHSS}{Frequency-Hopping Spread Spectrum}
\acrodef{PACT}{Private Automated Contact Tracing}
\acrodef{DP-3T}{Decentralized Privacy-Preserving Proximity Tracing}
\acrodef{SIG}{Special Interest Group}
\acrodef{GDPR}{General Data Protection Regulation}
\acrodef{PRF}{Pseudorandom Function}
\begin{document}

\title{\proto: A Provably Secure and Privacy-Preserving Contact Tracing Protocol}
\author{
    \IEEEauthorblockN{Pietro Tedeschi, \IEEEmembership{Student Member,~IEEE,} Spiridon Bakiras, \IEEEmembership{Member,~IEEE,} and Roberto Di Pietro, \IEEEmembership{Senior Member,~IEEE}}\\
    \IEEEcompsocitemizethanks{\IEEEcompsocthanksitem Pietro Tedeschi, Spiridon Bakiras, and Roberto Di Pietro are with the Division of Information and Computing Technology (ICT), College of Science and Engineering (CSE), Hamad Bin Khalifa University (HBKU), Doha, Qatar. \protect\\
    e-mails: \{ptedeschi, sbakiras, rdipietro\}@hbku.edu.qa
    }
}

\maketitle

\begin{abstract}
A plethora of contact tracing apps have been developed and deployed in several countries around the world in the battle against \textsc{Covid-19}. However, people are rightfully concerned about the security and privacy risks of such applications. To this end, the contribution of this work is twofold. First, we present an in-depth analysis of the security and privacy characteristics of the most prominent contact tracing protocols, under both passive and active adversaries. The results of our study indicate that all protocols are vulnerable to a variety of attacks, mainly due to the deterministic nature of the underlying cryptographic protocols. Our second contribution is the design and implementation of \proto, a novel contact tracing protocol that can defend against most passive and active attacks, thus providing strong (provable) security and privacy guarantees that are necessary for such a sensitive application.

Our detailed analysis, both formal and experimental, shows that \proto~satisfies security, privacy, and performance requirements, hence being  an ideal candidate for building a contact tracing solution that can be adopted by the majority of the general public, as well as to serve as an open-source reference for further developments in the field. 
\end{abstract}

\begin{IEEEkeywords}
security, privacy, contact tracing, public-key cryptography
\end{IEEEkeywords}

\section{Introduction}
\label{sec:intro}
The sudden outbreak of the \textsc{Covid-19} coronavirus has fundamentally changed our society. The high infection rate of the virus facilitated its rapid spread around the world that resulted in the most deadly pandemic in recent history~\cite{wen2020study}. Unfortunately, according to health experts, \textsc{Covid-19} will almost certainly not be the last pandemic, so we, as a society, should be prepared to tackle future outbreaks more effectively and efficiently. 
To this end, contact tracing---followed by aggressive testing---has proven to be a very valuable tool in the battle against \textsc{Covid-19}~\cite{ferretti2020, martin2020}. In particular, contact tracing involves the early identification and notification of people that have been exposed to the virus by being in close proximity, for a given time in the near past, to a user that has tested positive.

Traditionally, contact tracing has been performed exclusively by public health professionals, in the form of interviews. However, relying solely on patient interviews is not a very effective approach. First, it requires an enormous workforce to trace potential infection chains when there are thousands of new cases discovered daily. Second, face-to-face or even phone interviews regarding social contacts may feel like an invasion of privacy to many people, who might refrain from sharing all the relevant information. Finally, close contact with complete strangers is a regular theme in our daily lives (public transit, supermarkets, shopping malls, etc.), and such infection chains cannot be tracked with manual contact tracing methods~\cite{tedeschi2020iotrace}.

Alternatively, \textit{digital} contact tracing is a technology that is gaining significant traction among governments and public health officials. In a nutshell, digital contact tracing is generally 
achieved via mobile applications that leverage either the \ac{BLE} protocol or the  \ac{GNSS} technologies to keep track of other mobile devices in their vicinity~\cite{cunche2020}. More specifically, every device periodically broadcasts a random beacon that is intercepted and recorded by the nearby devices. If a beacon is detected to be closer than a certain threshold (e.g., $2$ meters) for a significant amount of time, the event is recorded permanently in the \textit{contact list} maintained by that device. When a user tests positive for \textsc{Covid-19}, the public health authorities are given access to their device in order to release their random beacons (and/or contact list) to a centralized database. Subsequently, other devices will download the released information and identify whether they have been in contact with the infected individual.

There have been a plethora of contact tracing apps implemented and used around the world. Singapore's TraceTogether app~\cite{tracetogether} was one of the first wide-scale deployments, while many other big players have entered  the arena, including the European Union~\cite{pepppt, troncoso2020} and Apple/Google~\cite{applegoogle}. Nevertheless, people nowadays are very concerned about their privacy and may be unwilling to install and run a surveillance-type application~\cite{sun2020vetting, garg2020}. Further, the energy tool of such applications and the perceived battery life shortening could limit the adoption of contact tracing by the general public. A solution that on one hand offers provable security and privacy guarantees, while on the other hand introduces only a little overhead, is in dire need.

\textbf{Contribution.} 
To this end, the first contribution of this work is an in-depth analysis of the security and privacy characteristics of the most prominent contact tracing protocols. We look at the fundamental mechanisms incorporated in these methods, including the type of information that is collected, the location where that information is stored (centralized vs. decentralized vs. hybrid), and the cryptographic primitives involved in the generation of the random beacons. We also consider a wide range of adversarial behavior, ranging from simple passive (eavesdropping) attacks to more elaborate, active, ones such as replay and relay attacks.\\
Our results indicate that, at the very least, all existing approaches are vulnerable to simple eavesdropping attacks that can de-anonymize an individual once they have tested positive for the coronavirus. Indeed, beacons are generated in a deterministic manner (e.g., using hashing) and are always transmitted in clear text. As such, an adversary can easily intercept them with off-the-shelf antennas, and tag them with time-stamped location information. Therefore, when users disclose their beacons (or contact lists), the adversary can track their entire location history over the past two weeks.

This led us to our second contribution, which is the design of \proto, a novel contact tracing protocol with very strong, provable privacy guarantees that protects the users' proximity data against sophisticated attacks. The main novelty of our protocol is the use of beacons that are generated with public-key cryptographic primitives. The probabilistic nature of public-key cryptography allows for the randomization of all previously transmitted beacons, so that they can be safely published without disclosing any identifiable information to an adversary. As such, \proto\ is resilient to eavesdropping attacks. Furthermore, to strengthen our protocol against more powerful, active adversaries, we introduce a simple extension (based on digital signatures) that prevents an adversary from replaying previously transmitted beacons. Finally, we demonstrate with experiments\footnote{The source code of \proto~will be released as open source when the paper accepted} run on mobile phones that \proto's overhead in terms of computations and energy consumption is very reasonable for an average smartphone device. 

\textbf{Roadmap.} The remainder of the paper is organized as follows. Section~\ref{sec:background} introduces the technical background related to contact tracing technologies and public-key cryptography. Section~\ref{sec:scenario_adv_model} illustrates the generic contact tracing model and describes the adversarial behavior that we assume in this paper. Section~\ref{sec:privacy_analysis} presents our in-depth study on the security and privacy of the current state-of-the-art solutions. Section~\ref{sec:proto} introduces the \proto\ protocol and Section~\ref{sec:performance} investigates its performance in terms of computations and energy requirements. Section~\ref{sec:discussion} highlights some important optimizations that mitigate the cost of public-key cryptography, and Section~\ref{sec:conclusion} concludes our paper. 

\section{Technical Background}
\label{sec:background}
In this section, we discuss briefly the wireless technologies that are employed by existing contact tracing protocols, including Bluetooth and satellite communications. We also introduce \ac{ECC}, which is the underlying cryptographic primitive of \proto.
\noindent

\subsection{Bluetooth}
\label{sec:bluetooth}
The wireless Bluetooth technology was conceived in the early 1990s at Ericsson and was intended to replace the RS-232 data cable. Specifically, it allows fixed and mobile devices to exchange data over short distances using the \ac{UHF} spectrum. Bluetooth was first standardized as the IEEE 802.15.1 protocol, with the current standard maintained by the Bluetooth \ac{SIG}.

Nowadays, there is a wide range of applications that adopt Bluetooth technology to manage wireless devices such as headphones, mice, keyboards, and printers. More importantly, Bluetooth is the underlying technology for numerous novel applications, such as indoor localization, \ac{IoT}, contact tracing, gaming, smart-locking, networking, and data streaming, to name a few. For example, by placing Bluetooth transceivers around large shopping areas, we can enhance user experience via Bluetooth-based mobile advertising applications. Note that, under the current standard, a master Bluetooth device can establish a one-to-one communication with a maximum of $7$ devices in an ad-hoc network. 

From the physical layer perspective, Bluetooth communications adopt the \ac{UHF} frequency band and, in particular, the frequencies ranging from $2.402$~GHz to $2.480$~GHz. The modulation scheme is either \ac{GFSK} with a bit rate of $1$~Mbits/sec, or \ac{DPSK} with a maximum bit rate of $3$~Mbits/sec. Furthermore, Bluetooth employs \ac{FHSS} and divides the spectrum into $79$ channels, each with a bandwidth of $1$~MHz. 
\ac{BLE} accommodates $40$ channels, each with a bandwidth of $2$~MHz. 
The transmission range depends on the device class and its maximum transmission power, as shown in the Table~\ref{tab:bl_dev_class}.

\begin{table}[htbp]
\color{black}
\caption{Bluetooth Device Class.
}
\centering
\begin{tabular}{|c|c|c|c|c|}
\hline
\textbf{Device Class} & \textbf{Max TX Power} ($dBm$) & \textbf{Range} ($m$) \\ \hline\hline
$1$ & $20$ & $~100$ \\ \hline
$1.5$ & $10$ & $~20$ \\ \hline
$2$ & $4$ & $~10$ \\ \hline
$3$ & $0$ & $~1$ \\ \hline
$4$ & $-3$ & $~0.5$ \\ \hline
\end{tabular}
\label{tab:bl_dev_class}
\end{table}

The Bluetooth \ac{SIG} has released several versions of the Bluetooth standard, each supporting backward compatibility. The latest release is the Bluetooth Core Specification version 5.2. Notice that, \ac{BLE} was first introduced in version 4.0 and was further improved in versions 4.1 and 4.2. In this work, we consider the Bluetooth Core Specification version 4.2 standard. Under the cited  standard, the modulation rate is $1$~Mbits/sec and the format of the Bluetooth frame is depicted in Fig.~\ref{fig:ble_frame}.

\begin{figure}[htbp]
  \centering
  \includegraphics[angle=0, width=\columnwidth]{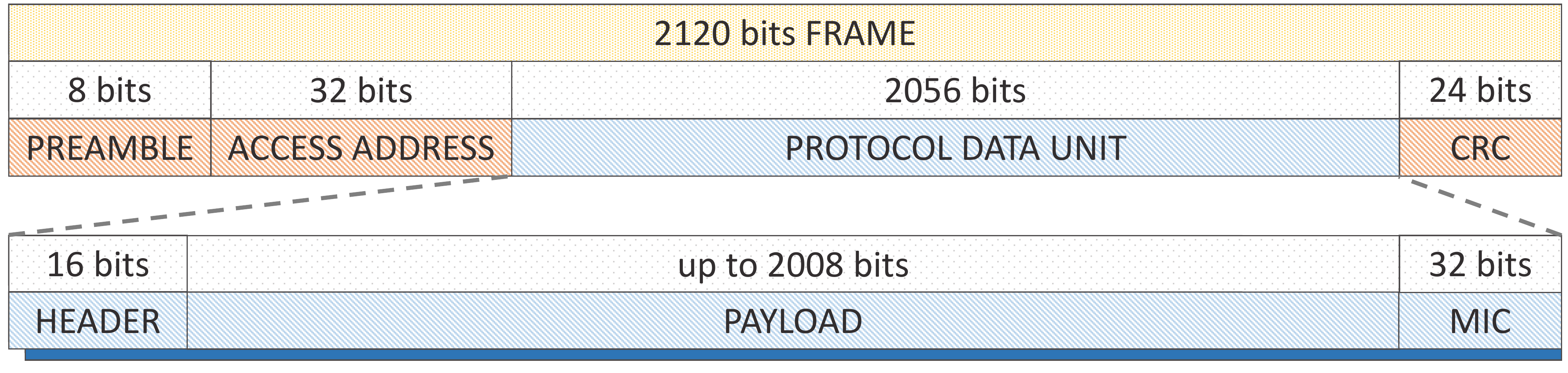}
  \caption{Bluetooth v4.2 frame: Packet format (upper part) and Protocol Data Unit (bottom part).}
  \label{fig:ble_frame}
\end{figure}

Bluetooth v4.2 frames have an overall size of $2120$~bits~\cite{bluetooth2010inc}. The message starts with the first $8$~bits reserved to a \textit{preamble} that is used to identify an upcoming Bluetooth message and allow the receiver to synchronize with the symbols emitted by the transmitter. 
An \emph{Access Address} of $32$~bits denotes the correlation code tuned to the physical channel. The \textit{Protocol Data Unit} (PDU) consists of $2056$~bits, and is reserved for Data TX and Advertising. Finally, the last $24$~bits are devoted to an error-detection code (CRC) and store the checksum of all bytes in the PDU. The bottom part of Fig.~\ref{fig:ble_frame} shows the contents of the PDU. It includes the Bluetooth packet \emph{Header} ($16$~bits), the link-layer \emph{Payload} (up to $2008$~bits of data), and $32$~bits of a \textit{Message Integrity Check} (MIC) \cite{gupta2016inside}.

Note that Bluetooth RF operations take place according to a slot-based \ac{TDMA} schedule. Specifically, assuming a data rate of $1$~Mbits/sec and a packet size of $265$~bytes, the transfer duration is $2.12$~ms.

\subsection{\acl{GNSS}}
\label{sec:gnss}
Some proximity tracing solutions adopt \ac{GNSS} technologies as a key element to approximate user location. Any smart device equipped with a \ac{GNSS} module can receive RF signals originating from \ac{MEO} satellites, located $19,000$ to $23,000$~km above Earth. Common \ac{GNSS} devices are only able to receive in the Upper L-Band, i.e., in the $1.5$ GHz range, with an
accuracy of about $3-5$ meters for the positioning \cite{gps_sps_ps}.
Each satellite is synchronized with atomic clocks and sends a navigation signal that contains information about the delivery time and the deviation from its expected trajectory.

Several \ac{GNSS} technologies are available on the market, with the most popular one being \ac{GPS} that is operated by the United States Department of Defense. Other systems include the Russian GLONASS, the European GALILEO, and the Chinese BEIDOU. Nevertheless, \ac{GNSS} technologies have some well-known weaknesses: they do not work in indoor environments,  are extremely sensitive to jamming attacks, and  are vulnerable to spoofing attacks if the RF signal is not encrypted~\cite{Schmidt2016}.

\ac{GNSS} technologies are widely adopted for several applications that span from location-based services to the monitoring and tracking of objects in remote areas. However, in the proximity tracing scenario, and compared to Bluetooth-based solutions, \ac{GNSS} has the following drawbacks: (i)  higher energy consumption due to the position sensors; (ii)  lower accuracy in terms of proximity localization; and, (iii) an inability to work in close spaces, like buildings.

\subsection{Elliptic Curve Cryptography}
\label{sec:ecc}
\acl{ECC} is a powerful approach to public-key cryptography that is based on the algebraic structure of elliptic curves over finite fields. \ac{ECC} is considered as a better alternative to traditional schemes over finite fields because, for the same level of security, it offers much smaller keys and ciphertexts~\cite{darrell_ecc}. 
As such, \ac{ECC} is more suitable for resource-constrained devices, such as smartphones or \ac{IoT} devices. An elliptic curve defines a set of coordinates $(x,y)$ that satisfy Eq.~\ref{eq:ecc_eq} below. In addition, a special point $\mathcal{O}$, namely point at infinity, represents the point at the ends of all lines parallel to the $y$-axis. 
\begin{equation}
    \label{eq:ecc_eq}
    y^2 = x^3 + ax + b
\end{equation}
The elliptic curve domain parameters over the finite field $\mathbb{F}_p$ are a sextuple $\mathcal{E} =(p, a, b, G, q, h)$, where: (i) $p>3$ is an integer specifying the finite field $\mathbb{F}_p$; (ii) $a, b$ are the elements that define the elliptic curve; (iii) $G \in \mathcal{E}(\mathbb{F}_p)$ is the generator point; (iv) prime $q$ is the order of the subgroup generated by $G$; and, (v) integer $h$ is the cofactor of the subgroup.


In \ac{ECC}, the private key is generated by choosing a uniformly random number $x\in \ZZ_q^*$, i.e., a number in the interval $[1,q)$. The corresponding public key is then computed as $P=xG$. The security of \ac{ECC} is based on the intractability of finding the discrete logarithm of a random elliptic curve point $P$ with respect to a publicly known base point $G$. This is known as the \ac{ECDLP}. The bit-length $n$ of the prime order $q$ determines the security level of an \ac{ECC} instantiation. We formally define the assumption behind the security of \ac{ECC} schemes below.

\begin{definition}
A function $f$ is \textbf{negligible} if for every polynomial $p(\cdot)$ there exists $N$ such that, for all integers $n>N$, it holds that $f(n) < \frac{1}{p(n)}$. We denote a negligible function of $n$ as $\negl$.
\end{definition}

\begin{definition}
The \textbf{\ac{ECDLP} assumption} is as follows: given $P, G \in \mathcal{E}(\mathbb{F}_p)$ where $P = xG$, $x \in \ZZ_q$, and $q \in \{0,1\}^n$ is the order of the elements $P$ and $G$, a polynomial-time algorithm can output $x$ with probability $\negl$. 
\end{definition}



\section{System and Adversarial Model}
\label{sec:scenario_adv_model}
In this section, we introduce the generic contact tracing environment assumed in our work (Section~\ref{sec:scenario}) and present the details of the underlying adversarial model (Section~\ref{sec:adv_model}).

\subsection{System Model}
\label{sec:scenario}
We assume a typical \ac{BLE}-based contact tracing environment, as illustrated in Fig.~\ref{fig:scenario}. Users that have subscribed to the contact tracing system are constantly broadcasting ephemeral identifiers (or beacons\footnote{Henceforth, we use the terms beacon and ephemeral ID interchangeably.}) that are randomly generated according to \proto's protocol specifications. Additionally, all users in proximity to a broadcasted beacon will temporarily store it into their local memory.

\begin{figure}[ht!]
\centering
\includegraphics[width=\columnwidth]{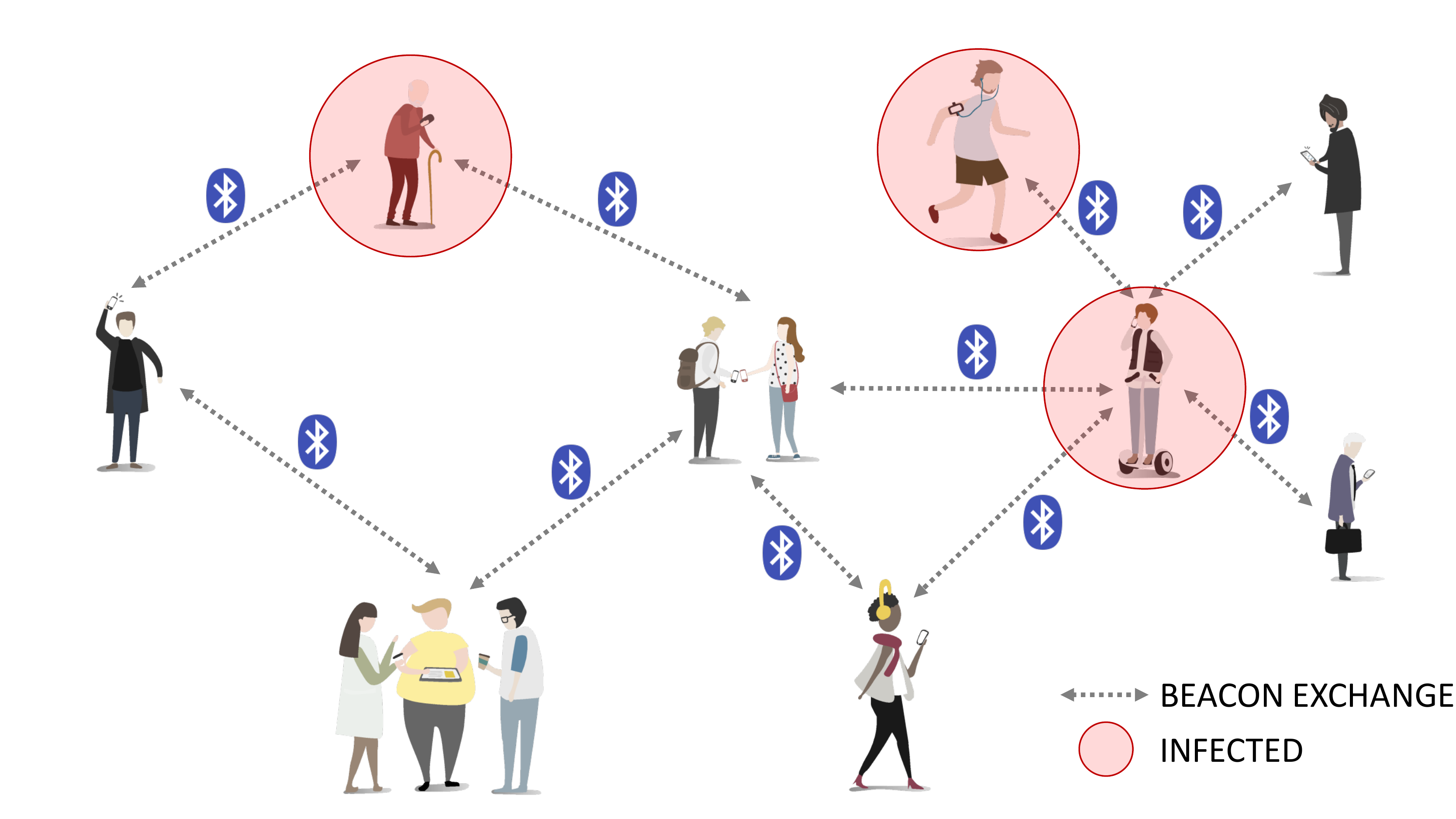}
\caption{\ac{BLE}-based contact tracing.}
\label{fig:scenario}
\end{figure}

A newly created beacon is valid for a fixed time window, in order to allow the surrounding devices to detect possible contagion events. In most existing applications, beacons are changing every $10$ to $15$ minutes, thus preventing malicious adversaries from tracking individual users. When a device receives the same beacon over a period of time that is considered significant by the public health authorities (and if the device is assumed to be sufficiently close to the beacon's source) the beacon is permanently stored in the \textit{contact list} of the device. Based on the characteristics of the targeted virus, contact events can be safely erased after a few days (e.g., $14$ days for \textsc{Covid-19}).

When a user is diagnosed as positive (e.g., the highlighted users in Fig.~\ref{fig:scenario}), the public health authorities will coordinate with the user to upload the stored contact list to their centralized database. The remaining users will periodically download the new contact lists from the database and check whether their own beacons are contained therein. If this is true, the user will be notified about the verified contagion event and will be given instructions for further actions.

\subsection{Adversarial Model}
\label{sec:adv_model}
The adversary assumed in our work is very powerful and may perform both passive and active attacks. We assume that the adversary is equipped with a powerful antenna, which can be either a regular Bluetooth handheld device, or a \ac{SDR} that is operated through a laptop/smartphone running an \ac{SDR}-compatible software tool, such as GNURadio~\cite{Tuttlebee2002}.

\subsubsection{Passive Attacks}
\hfill\\
\indent \textbf{Eavesdropping.} We assume that the adversary is a global eavesdropper, able to detect and decode any message broadcasted on the Bluetooth communication channel. This capability allows the adversary to obtain all the information  within the exchanged Bluetooth beacons, including the ``rolling'' standard ID adopted by Bluetooth technology and the ephemeral IDs of the underlying contact tracing protocol. In addition, the adversary locally enriches the intercepted beacon with appropriate metadata, such as time and location information. The adversary could hence conduct a cross-referencing analysis with data collected from various sources, including the ``infected'' beacons published by the public health authorities, background information on individual users, direct observation of user movements, etc.. 
The goals of an eavesdropping attack can be manifold. For example, the adversary may try to de-anonymize users that have tested positive for the virus, track user locations over time, or identify social relationships among users.

\subsubsection{Active Attacks}
\hfill\\
\indent \textbf{Replay.} We also assume that the adversary has Bluetooth transmission capabilities. Thus, it can replay beacons previously acquired  (via eavesdropping) from the Bluetooth communication channel. These beacons contain the ephemeral IDs of legitimate users and are, therefore, indistinguishable from legitimate beacons when processed by the contact tracing app. The goal of the adversary is to inject false contact events into the users' contact lists, 
hence increasing the chances for the user to falsely being considered a contagion risk---consequences could be, for instance, a mandatory, legally binding, period of quarantine for the attack's victim.

\textbf{Relay.} This attack aims at bypassing replay attack countermeasures that may incorporate timing information within the beacons, thus limiting the time window where beacons can be replayed. It can be thought of as an amplified version of a replay attack, where the adversary instantaneously disseminates every captured beacon to multiple locations under its control. The beacons are then replayed immediately within each of those locations. 

The reason why replay/relay attacks are dangerous in the context of contact tracing is that they can be used to manipulate the recorded contact lists. Specifically, an adversary can install a powerful antenna at a busy location (e.g., a shopping mall) and start eavesdropping on the transmitted beacons. Then, it replays those beacons with a high transmit power so that they register as legitimate contacts to a large number of users (even if they are further away from the antenna). As a result, if one of these users has a positive diagnosis, there will be an overwhelming amount of false negative alerts that would saturate test centers (if suspect-positive users are tested) or likely result in mandatory quarantine if testing is not possible---hence imposing an unnecessary privation of personal freedom.

Note that, in this work, we do not address \ac{DoS} attacks that may be launched by an adversary to disrupt the operation of the contact tracing network. For example, the adversary might  try to jam the Bluetooth communication channel or compromise the availability of the centralized database. 
We also assume that the disclosure of contact lists from infected users is done securely by the public health authorities. In other words, an adversary is not able to inject false data into the centralized database. Such attacks are independent from 
the underlying contact tracing protocol, and hence considered out of scope with respect to our contributions.
\section{Security and Privacy Analysis of Existing Solutions}
\label{sec:privacy_analysis}
In this section, we present an in-depth analysis of the security and privacy characteristics of the current state-of-the-art contact tracing protocols. We focus our discussion on the following metrics.

\textbf{Architecture (C/D/H).} In a centralized (C) architecture, all mobile devices forward their proximity data to a centralized database that is administered by the public health authorities. 
Conversely, in a decentralized (D) approach, the mobile devices do not share their data with the authorities, unless they test positive for the virus. Furthermore, the contact tracing operation is performed in a fully decentralized manner at the individual devices. Finally, in a hybrid (H) architecture, data collection follows the decentralized approach (nothing is disclosed to the authorities), while the contact tracing operation is performed at a centralized location---by having the infected individuals reveal their entire contact history to the health authorities. Clearly, the centralized (and to a lesser extent the hybrid) architecture requires users to fully trust the authorities from a security and privacy perspective.

\textbf{Privacy from Authority.} 
This is a measure of the privacy that users enjoy with regards to the information that is disclosed to the public health authorities by the contact tracing application. For instance, in the absence of a fully decentralized architecture, a central authority can harm the privacy of the users by collecting sensitive data (e.g., \ac{GPS} locations, contact lists) and/or performing statistical inference on the acquired data.


\textbf{Linkage Attack Resilience.} A linkage attack attempts to identify users by analyzing an anonymized dataset, while cross-referencing it with external data sources, such as user-specific information, geolocation data, etc.~\cite{Gvili2020}. 

For instance, the beacons released to the authorities by a user that had a positive test can potentially reveal his/her identity if the beacons can be linked to precise geographic locations.



\textbf{Replay/Relay Attack Mitigation.} As explained previously, replay and relay attacks can be very damaging to the effectiveness of contact tracing. Therefore, contact tracing protocols must incorporate appropriate cryptographic mechanisms to limit their scope or, if possible, eliminate them altogether.



\textbf{Robustness against Eavesdropping.} This metric quantifies the usefulness of the beacons that are collected by an eavesdropping adversary. For example, if users upload their previously transmitted beacons to a public database (when deemed infected), an adversary can leverage them to de-anonymize the users.

\textbf{Ephemeral Identity.} An ephemeral identity is the temporary ID associated with the real user identity and is contained inside the transmitted beacons. This is the minimum privacy guarantee that all contact tracing protocols should provide. Additionally, ephemeral IDs should change frequently, to prevent the tracking of user movements.



\textbf{Exposure of Geolocation Data.} Some contact tracing protocols collect \ac{GPS} data to perform the proximity testing operation. As such, when users are diagnosed as positive, their recent trajectories are disclosed to the public health authorities and, subsequently, to the general public. The cited  approach is less private than BLE-based contact tracing, because it does not require from the adversary any effort to learn the detailed trajectory data.



\textbf{Open Source.} The protocol and the code are released as open source to facilitate thorough security and privacy audits from both the industry and academia.

\subsection{Existing Solutions}

\textbf{\ac{DP-3T}~\cite{troncoso2020}.} The \ac{DP-3T} protocol is the product of a large consortium of European universities and research institutes. It is a fully decentralized protocol that leverages \ac{BLE} beacons to exchange ephemeral IDs among the participating devices. As reported by Vaudenay~\cite{Vaudenay2020}, the ephemeral identifiers are generated with an \pcalgostyle{HMAC}--\pcalgostyle{SHA256} key, encrypted with \pcalgostyle{AES}--\pcalgostyle{CTR} or \pcalgostyle{Salsa20}. Further, these keys are kept in the device's memory and erased after they are no longer required by the health authorities (e.g., after $14$ days). The keys are derived from a root key that is randomly generated every day. The analysis of Vaudenay highlights several security and privacy issues, including a replay attack which is possible due to the absence of message authentication codes. When a user is diagnosed as positive, \ac{DP-3T} offers several options for disclosing their own beacons to the health authorities. The least private approach discloses all the past daily keys, while the more privacy-preserving option allows the user to choose which beacons to reveal. 

\textbf{Apple/Google~\cite{applegoogle}.} Apple and Google jointly designed and implemented a decentralized proximity tracing protocol that is very similar to \ac{DP-3T}. Unlike other approaches, this protocol is fully integrated within the operating system, and exposes appropriate APIs to third-party developers in order to develop their own smartphone applications. From a privacy perspective, Apple and Google have not released their source code, so users must trust these companies (and their operating systems) with respect to data analysis~\cite{Gvili2020}. With regards to the cryptographic specifications, the key schedule for contact tracing is organized into $3$ main phases: (i) tracing key generation, (ii) daily tracing key generation, and (iii) rolling proximity identifier generation. The tracing key generation procedure is executed when contact tracing is first enabled on the mobile device. The $32$-byte tracing key is derived via a Cryptographic Random Number Generator and is stored on the device. From this tracing key, an \ac{HKDF}, such as \pcalgostyle{HMAC}--\pcalgostyle{SHA256}, is employed to derive a $16$-byte daily tracing key that is valid for a $24$-hour time window, based on Unix epoch time. Finally, the rolling proximity identifiers exchanged within the \ac{BLE} beacons are $16$-byte ephemeral IDs, derived from the daily tracing key through an \pcalgostyle{HMAC}--\pcalgostyle{SHA256} operation~\cite{applegoogle_crypto_spec}.

\textbf{Berke \textit{et al.}~\cite{berke2020assessing}.} This is a \ac{GPS}-based solution that also adheres to the decentralized architecture. Specifically, the mobile devices are constantly monitoring the users' location, by recording their \ac{GPS} coordinates every $t$ minutes. Therefore, each point in the user's location history has three dimensions, namely longitude, latitude, and time. When a user is diagnosed as infected, their location history is uploaded to the health authorities' server, thus allowing other devices to download it and look for possible contagion events. To preserve privacy, precise \ac{GPS} coordinates are replaced by larger geographic areas, and the resulting point-intervals are obfuscated with a hash function like \pcalgostyle{SHA256}. In addition, the authors propose a technique where the server and individual clients invoke a private set intersection protocol to identify contagion events, without disclosing the patient's location history. In terms of security, an adversary can either broadcast fake \ac{GPS} signals~\cite{oligeri_wisec_19} (\ac{GPS} spoofing) or perform a \ac{GPS} jamming attack~\cite{Hu2009}, in order to disrupt the proximity tracing operation.

\textbf{Hamagen~\cite{Hamagen}.} Hamagen is a \textsc{Covid-19} exposure prevention app that was developed by Israel's Ministry of Health. It is fully decentralized and leverages \ac{GPS} measurements to identify possible \textsc{Covid-19} exposure events. Once an hour, the installed application downloads a file with an anonymous list of locations, times, and dates that confirmed (from the Ministry) patients have visited in the past. The app will then perform a cross-reference on the (timestamped) locations visited by the user and determine whether the user is at risk of exposure. Note that, the application exploits \ac{GPS}, Bluetooth, and WiFi to improve the accuracy of geolocation. Additionally, it does not rely on any cryptographic primitives.

\textbf{Corona 100m~\cite{corona100m}.} This is South Korea's centralized and \ac{GPS}-based solution that publicly informs citizens (and the health authorities) of known cases within $100$ meters of where they are. It is worth noticing that this application cross-references credit card data, medical records from hospitals, social networks graphs, and video surveillance footage, thus raising serious privacy concerns among the citizens.

\textbf{\ac{PEPP-PT}~\cite{pepppt}.} The \ac{PEPP-PT} protocol is the European Union's solution to digital contact tracing. \ac{PEPP-PT} is a hybrid approach based on \ac{BLE} technology and is compliant with the EU's \ac{GDPR} privacy laws. Unlike DP-3T and Apple/Google, \ac{PEPP-PT} mandates that the users' \ac{BLE} beacons (ephemeral IDs) are generated and distributed by the public health authorities. Specifically, during registration, each user is assigned a random ID by the system. Then, for each contact tracing epoch $t$, the system selects a global secret key $K_t$ and encrypts (using \pcalgostyle{AES}) all user IDs with the same key. The resulting ciphertexts are the ephemeral IDs that are distributed in advance to the individual users. When testing positive for the coronavirus, users release their logged contact lists to the authorities, who are then able to perform the contact tracing operation at the centralized server, since they can identify the IDs of all contacts by decrypting the beacons with the secret keys. Clearly, this approach is less privacy-preserving than the decentralized solutions, because it allows anyone who owns the \pcalgostyle{AES} keys to track all registered citizens.

\textbf{\ac{TCN}~\cite{tcn}.} \ac{TCN} is a fully decentralized protocol proposed by the Coalition Network. It leverages \ac{BLE} communications and its operation is very similar to \ac{DP-3T} and Apple/Google. In other words, mobile devices generate their own ephemeral IDs that are broadcasted to other devices in their proximity. Each device gradually builds its own private contact list that is not shared with the authorities. After a positive diagnosis, users upload their keys to the centralized database, in order to allow other devices to reconstruct the corresponding ephemeral IDs and measure their exposure risk. The main difference in \ac{TCN} is that the device generates a public and private key pair (based on \pcalgostyle{Ed25519} curves) so that other devices can verify the authenticity of the reconstructed ephemeral IDs. In particular, the public-key is used as an input to the \pcalgostyle{SHA256} hash function that generates the ephemeral IDs, while the private key is used to sign the report that is sent to the health authorities after the user is diagnosed as positive.

\textbf{BlueTrace~\cite{bluetrace}.} BlueTrace was one of the first protocols to enjoy a wide-scale deployment since it was the fundamental building block in Singapore's TraceTogether app. Its operation is quite similar to \ac{PEPP-PT}, i.e., a hybrid approach where the ephemeral IDs are generated by the public health authorities. As such, the contact tracing operation is performed in a centralized manner by the health authorities, after the positively diagnosed individuals upload their contact logs to the centralized server. The system maintains a global secret key for the generation of the ephemeral IDs and, during registration, every user is assigned a unique ID. Then, the per-user ephemeral IDs are generated by encrypting a message---comprising the user ID, the validity period (start and expiration), an initialization vector, and an authentication tag---with an \pcalgostyle{AES256}--\pcalgostyle{GCM} cipher and \pcalgostyle{Base64} encoding the resulting ciphertext. To minimize the effect of replay attacks, the validity period of each ephemeral ID is set to $15$ minutes~\cite{bluetrace}.

\textbf{\ac{PACT}~\cite{pact}.} \ac{PACT} is another contact tracing protocol that follows the decentralized approach of DP-3T and Apple/Google. In particular, every device generates its own ephemeral IDs through a \ac{PRF} that takes as input a random $256$-bit seed (which changes every hour) and the current time measured at one-minute precision. Furthermore, a newly diagnosed patient will upload all the random seeds from the recent past to the centralized database, in order to allow other users to reconstruct the ephemeral IDs and estimate their exposure risk. Note that, similar to other approaches, \ac{PACT} reduces the risk of replay attacks by including timestamps into the generation of the ephemeral IDs. 

\textbf{Whisper~\cite{whisper}.} Whisper is the only proximity tracing protocol thus far in the literature that employs public-key cryptography. It is a fully distributed protocol that leverages \ac{BLE} communications to monitor and log contact events. Specifically, every device periodically generates a fresh public/private key pair, based on \pcalgostyle{Ed25519} elliptic curves. Additionally, unlike other protocols, Whisper operates in two phases. In the first phase, the device periodically scans the \ac{BLE} channel to detect other Whisper devices in its proximity. For each discovered device, the protocol initiates a connection (phase two), during which the two devices perform a Diffie-Hellman key exchange to generate a shared key $K$. For each key $K$, the device computes two hash digests using the \pcalgostyle{BLAKE2}--\pcalgostyle{160} hash function: the \textit{tell-token} is the hash digest of $K$ and the device's own public-key, while the \textit{hear-token} is the hash digest of $K$ and the connecting peer's public-key. After a positive diagnosis, the device uploads its own tell-tokens to the centralized database, which enables other devices to compare them against their own hear-tokens. The major advantage of Whisper in terms of privacy is that the published tell-tokens do not reveal any information to an adversary, because it is infeasible to compute the underlying shared keys.

\textbf{IoTrace~\cite{tedeschi2020iotrace}.} IoTrace is a novel IoT-enabled approach to contact tracing.
Specifically, in IoTrace, mobile devices neither  receive the broadcasted beacons nor  maintain individual contact lists. Instead, the deployed IoT devices collect every beacon that is transmitted around them and send all data to a centralized server. When a user tests positive for the coronavirus, the mobile device releases its transmitted beacons to the authorities who then publish all the beacons that are considered in close proximity to the user. The remaining users periodically download the updated database and perform the exposure notification function in a distributed manner. The key characteristic of IoTrace is the reduced energy cost on mobile devices, due to the absence of frequent scanning operations to identify transmitted beacons. However, in this contribution, we only focus on peer-to-peer contact tracing that is performed without the use of an existing IoT infrastructure. As such, we do not include IoTrace in our comparison.
\\\\
\textcolor{black}{
A qualitative survey of the inherent privacy limitations of contact racing applications is provided in \cite{inherent_privacy_limitations}, whereas a quantitative comparison among different solutions and applications for contact tracing, such as PEPP-PT~\cite{pepppt}, DP-3T~\cite{troncoso2020}, Apple/Google~\cite{applegoogle}, Hamagen~\cite{Hamagen} and BlueTrace~\cite{bluetrace} is presented in \cite{tedeschi2020iotrace}. The analysis concerned the adopted wireless technology, the deployed architecture, the RF energy consumption, and the security and privacy aspects of elements such as location and health status. The analysis is further enriched by taking into account the storage requirements and the computational costs incurred by the adoption of given cryptographic primitives. 
Further, the authors experimentally assessed the performance by considering the Bluetooth SoC nRF51822 and the GPS SiP nRF9160 (for Hamagen) hardware platforms. 
}


\subsection{Security and Privacy Analysis}
In this section, we discuss the security and privacy characteristics of the aforementioned protocols. Our results are summarized in Table~\ref{tab:solutions_comparison}.

\begin{table*}[ht]
\caption{Comparison of our solution against the state-of-the-art approaches. None: \FiveStarOpen\FiveStarOpen\FiveStarOpen, Low: \FiveStar\FiveStarOpen\FiveStarOpen, Medium: \FiveStar\FiveStar\FiveStarOpen, High: \FiveStar\FiveStar\FiveStar
}
\centering
    \resizebox{\textwidth}{!}
    {\begin{tabular}{|p{2.8cm}||P{1.0cm}|P{0.8cm}|P{0.9cm}|P{1.0cm}|P{1.0cm}|P{0.8cm}|P{0.7cm}|P{1.1cm}|P{0.7cm}|P{1.2cm}|P{1.5cm}|}
    \hline
        \textbf{Features} & DP-3T \cite{troncoso2020} & A/G \cite{applegoogle} & Berke \textit{et al.} \cite{berke2020assessing} & Hamagen \cite{Hamagen} & Corona 100m\cite{corona100m} & PEPP-PT~\cite{pepppt} & TCN \cite{tcn} & BlueTrace \cite{bluetrace} & PACT \cite{pact} & Whisper \cite{whisper} & \proto \\ \hline\hline
        \emph{Architecture (C/D/H)} & D & D & D & D & C & H & D & H & D & D & D\\ \hline
        \emph{Wireless Technology} & BLE & BLE & GPS & GPS & GPS & BLE & BLE & BLE & BLE & BLE & BLE\\ \hline
        \emph{Privacy from Authority} & \FiveStar\FiveStar\FiveStarOpen & \FiveStar\FiveStar\FiveStarOpen & \FiveStar\FiveStar\FiveStarOpen & \FiveStar\FiveStar\FiveStarOpen & \FiveStarOpen\FiveStarOpen\FiveStarOpen & \FiveStarOpen\FiveStarOpen\FiveStarOpen & \FiveStar\FiveStar\FiveStarOpen & \FiveStarOpen\FiveStarOpen\FiveStarOpen & \FiveStar\FiveStar\FiveStarOpen & \FiveStar\FiveStar\FiveStar & \FiveStar\FiveStar\FiveStar\\ \hline
        \emph{Linkage Attack \newline Resilience} & \FiveStar\FiveStarOpen\FiveStarOpen & \FiveStar\FiveStarOpen\FiveStarOpen & \FiveStar\FiveStarOpen\FiveStarOpen & \FiveStar\FiveStarOpen\FiveStarOpen & \FiveStarOpen\FiveStarOpen\FiveStarOpen & \FiveStarOpen\FiveStarOpen\FiveStarOpen & \FiveStar\FiveStarOpen\FiveStarOpen & \FiveStarOpen\FiveStarOpen\FiveStarOpen & \FiveStar\FiveStarOpen\FiveStarOpen & \FiveStar\FiveStarOpen\FiveStarOpen & \FiveStar\FiveStar\FiveStar\\ \hline
        \emph{Replay/Relay \newline Attack Mitigation} & \FiveStar\FiveStarOpen\FiveStarOpen & \FiveStar\FiveStarOpen\FiveStarOpen & N/A & N/A & N/A & \FiveStar\FiveStarOpen\FiveStarOpen & \FiveStar\FiveStarOpen\FiveStarOpen & \FiveStar\FiveStarOpen\FiveStarOpen & \FiveStar\FiveStarOpen\FiveStarOpen & \FiveStar\FiveStarOpen\FiveStarOpen & \FiveStar\FiveStar\FiveStar \\ \hline
        \emph{Robustness against \newline Eavesdropping \hfill} & \FiveStar\FiveStar\FiveStarOpen & \FiveStar\FiveStar\FiveStarOpen & N/A & N/A & N/A & \FiveStarOpen\FiveStarOpen\FiveStarOpen & \FiveStar\FiveStar\FiveStarOpen & \FiveStarOpen\FiveStarOpen\FiveStarOpen & \FiveStar\FiveStar\FiveStarOpen & \FiveStar\FiveStar\FiveStar & \FiveStar\FiveStar\FiveStar \\ \hline
        \emph{Ephemeral Identity} & Yes & Yes & N/A & N/A & N/A & Yes & Yes & Yes & Yes & Yes & Yes \\ \hline
        \emph{Exposure of \newline Geolocation Data} & No & No & Yes & Yes & Yes & No & No & No & No & No & No \\ \hline
        \emph{Open Source} & Yes & No & Yes & Yes & No & Yes & Yes & Yes & Yes & Yes & Yes \\ \hline
        \emph{Crypto Primitive(s)} & AES, SHA & SHA & SHA & $-$ & $-$ & AES & SHA & AES & SHA & ECC, BLAKE2 & ECC \\ \hline
    \end{tabular}}
\label{tab:solutions_comparison}
\end{table*}

\textbf{GNSS Solutions.}
\ac{GNSS}-based protocols rely on geolocation data to perform proximity tracing. As such, there are several concerns regarding their accuracy, security, and privacy. First, it is worth noting that, with the exception of Corona 100m which is a centralized approach, \ac{GNSS} protocols offer perfect privacy to all citizens that do not contract \textsc{Covid-19}. This is because mobile devices do not send any information to the health authorities. However, for the individuals that test positive, these protocols reveal their (partial or full) trajectories over an extended period of time to the health authorities. As such, they are prone to linkage attacks that may cross-reference the published geolocation data with background information about the users. Even though the protocol by Berke \textit{et al.} employs hashing to obfuscate the trajectories, it is still vulnerable to brute-force attacks that can easily de-anonymize the locations.

In terms of security, \ac{GNSS} solutions are not susceptible to replay/relay and eavesdropping attacks, because they do not broadcast any data on the wireless channel~\cite{azad2020}. On the other hand, \ac{GNSS} signals can be spoofed by a malicious adversary, due to the lack of encryption/authentication in consumer \ac{GNSS} products. Such attacks may allow an adversary to manipulate the \ac{GPS} measurements that are collected by the contact tracing application. Another limitation of \ac{GNSS}-based contact tracing is that generic devices such as smartphones, are provided with a recreational grade (accuracy within $\pm 7.6m$) \ac{GNSS} receiver. This is not sufficient to determine ``real'' contagion events and may result in an overwhelming amount of false-negatives. Additionally, in an urban/sub-urban scenario, the shadowing and multipath fading effects further undermine the geolocation accuracy. Finally, the risk of \textsc{Covid-19} exposure is significantly higher in indoor environments where, unfortunately, \ac{GNSS} localization is infeasible.


\textbf{Centralized Solutions.}
Corona 100m is the only truly centralized protocol appearing in our literature review. This is not surprising, since centralized approaches come with significant security and privacy concerns. For example, in the case of Corona 100m, the app continuously reports its location to the public health authorities and is notified whether a confirmed case is within 100m of that location. As a result, every citizen that is actively using the app can be tracked by the authorities. Consequently, Corona 100m offers no privacy from the authority and it is also vulnerable to linkage attacks, because geolocation data can be cross-referenced with external sources to reveal sensitive user information. Even in an ideal world where the authorities are honest and trusted by the general public, the vast amount of valuable data stored at the centralized database serves as an open invitation to malicious hackers.

\textbf{Hybrid Solutions.}
The two notable protocols that fall into this category are \ac{PEPP-PT} and BlueTrace. Although the contact logs are stored in a decentralized manner (on the mobile devices) and are only disclosed if a user is diagnosed as positive, hybrid solutions can potentially result in a complete privacy breach. Recall that, under the hybrid model, the health authorities generate and distribute the ephemeral IDs to all devices, in order to perform the contact tracing task at their own server. As such, an eavesdropping adversary with access to the encryption keys that generate the ephemeral IDs can reconstruct the movements of every registered user.

Consequently, hybrid solutions have no privacy from authority and are extremely vulnerable to linkage and eavesdropping attacks. Also note that, the authority is not anonymizing the user identities, but is instead utilizing a pseudonym schema that can be vulnerable to several attacks described in the literature, such as \textit{single position attack, context linking attack, multiple position and context linking attack, multiple position attack}, and \textit{compromised trusted authority}~\cite{Wernke2012}. In other words, hybrid protocols demand a complete trust to the authorities on behalf of the users. Finally, both \ac{PEPP-PT} and BlueTrace incorporate timestamp data into the generated ephemeral IDs, which limits the effect of replay attacks (although the timing information is relatively coarse). On the other hand, they both fail to protect against relay attacks, since they do not take into account geolocation data.

\textbf{Decentralized Solutions.}
Here we focus on \ac{BLE}-based protocols that utilize symmetric key algorithms, including \ac{DP-3T}, Apple/Google, \ac{TCN}, and \ac{PACT}. As explained previously, their algorithms are quite similar and, therefore, they share the same security and privacy characteristics. First, similar to the decentralized \ac{GNSS}-based protocols (Hamagen and Berk \textit{et al.}), they offer perfect privacy to the vast majority of users who never contract the coronavirus. Furthermore, the beacons collected by an eavesdropping adversary cannot be used to de-anonymize them, because they never disclose their own beacons to the health authorities. Conversely, positively diagnosed users do not enjoy any privacy, because the disclosed ephemeral IDs can place them at precise locations by an eavesdropping adversary (or regular users who may trace those beacons inside their stored contact logs). As a result, decentralized protocols are not resilient against linkage attacks. In terms of replay attacks, all protocols use time as an input to the beacon generation process, thus reducing the risk of contact log manipulation by an active adversary. Note, however, that time is approximated in the scale of minutes, so there is still a sufficiently large time window that attackers can exploit. On the other hand, none of the aforementioned protocols utilize geolocation and they are, therefore, vulnerable to relay attacks.

A notable advantage of \ac{BLE} technology with regards to localization is its accuracy, which makes it an ideal candidate for effective contact tracing. Specifically, every transmitted beacon contains information about the time of arrival and the \ac{RSS}. The latter is a key parameter for wireless positioning and aids in deriving the distance between two Bluetooth enabled devices, by leveraging a radio propagation model~\cite{wang2013}. However, from a privacy perspective, it is crucial that the device's operating system implements a rolling MAC address protocol (e.g., by randomizing its last $3$ bytes), in order to prevent the tracking of individual users. It is also recommended that the schedule of the ephemeral IDs follows that of the MAC address randomization. A final remark concerns the security of the database server. Even though contact tracing is performed in a decentralized manner, the server plays a crucial role as a proxy for the distribution of the ``infected'' beacons. As such, the security of the health authorities' infrastructure is of paramount importance, even in the decentralized architecture~\cite{Ahmed2020_Access}.

\textbf{Public-key Solutions.}
Whisper is the only contact tracing protocol in the literature that employs public-key cryptography. Specifically, it allows any two mobile devices to exchange a secret contact ID (tell-token and hear-token) that is infeasible to compute without knowledge of the underlying public keys. As a result, when diagnosed users disclose their own tell-tokens to the centralized server, the authorities cannot map them to precise locations. In other words, Whisper offers excellent privacy from authority and is immune to eavesdropping attacks. However, the tokens that are stored by the mobile devices contain timing information, so individual mobile devices may be able to launch linkage attacks against the published tokens that are present in their memory. Another limitation of Whisper is that it does not employ any mechanism to mitigate replay/relay attacks. Finally, we should mention that Whisper is a complex protocol that does not rely simply on broadcasted beacons. Instead, to compute a new pair of tokens, the two devices have to establish a peer-to-peer connection and perform a Diffie-Hellman key exchange.

In the next section, we introduce \proto, a novel public-key based protocol that avoids the pitfalls of Whisper and is an excellent candidate for secure and privacy-preserving contact tracing.

\section{The SpreadMeNot Protocol}
\label{sec:proto}
In this section, we describe in detail our proposed solution for privacy-preserving contact tracing. In Section~\ref{sec:overview} we give a brief overview of the scheme and present the cryptographic experiment that will be the basis of our security proof. In Section~\ref{sec:construction} we introduce our construction and prove its security. In Section~\ref{sec:active} we propose a simple extension that mitigates several active attacks, and in Section~\ref{sec:properties} we analyze the privacy properties of our protocol in comparison to the existing state-of-the-art approaches.

\subsection{Overview}
\label{sec:overview}
The \proto\ protocol consists of a tuple of algorithms $\Pi = (\pcalgostyle{GenKey},\pcalgostyle{GenBeacon},\pcalgostyle{RandBeacon},\pcalgostyle{Test})$ that operate as follows.
\begin{enumerate}\itemsep6pt
    \item \pcalgostyle{GenKey}: It takes as input a security parameter $n$ and outputs a private key $x$ and a public-key $P$.
    \item \pcalgostyle{GenBeacon}: It takes as input a public-key $P$ and outputs a beacon $\mathcal{C}$.
    \item \pcalgostyle{RandBeacon}: It takes as input a beacon $\mathcal{C}$ and outputs a randomized version $\mathcal{C}'$.
    \item \pcalgostyle{Test}: It takes as input a beacon $\mathcal{C}$ and a private key $x$, and outputs \textit{true} if the beacon is generated under public-key $P$.
\end{enumerate}

Initially, every user executes the \pcalgostyle{GenKey} algorithm to generate their public/private key pair. However, in this setting, the public-key is irrelevant to the rest of the users and does not need to be shared (but it does not have to be secret). At regular intervals (e.g., every $15$ minutes), the user's app will invoke the \pcalgostyle{GenBeacon} algorithm to generate a \textit{fresh} random beacon. That beacon will be broadcast continuously until the next interval, in order for the surrounding devices to detect the user as a potential contact. Note that only \emph{significant} events are stored on each device, i.e., beacons that are very close to the user (say, within $2$ meters) for a sufficiently long time duration. Based on the characteristics of the virus, old contact events (e.g., older than two weeks) are automatically erased. 

If a user tests positive for  \textsc{Covid-19}, he publishes his contact list on the centralized database managed by the public health authorities. The list consists of all the beacons that are currently stored on his device. More importantly, the published list is first \textit{permuted} and all its entries are randomized. Specifically, for each stored beacon $\mathcal{C}$, algorithm \pcalgostyle{RandBeacon} is invoked to produce a new beacon $\mathcal{C}'$ that is indistinguishable from $\mathcal{C}$. Finally, the published contact list is downloaded by all mobile devices that subsequently apply algorithm \pcalgostyle{Test} on every beacon. If any instance of \pcalgostyle{Test} outputs \textit{true}, the user infers that he has been in close proximity to the infected individual.

In terms of security, we want beacons generated by different public keys to be indistinguishable from each other. Therefore, we define an eavesdropping adversary $\adv$ that has access to all transmitted and (randomized) published beacons. The objective of $\adv$ is to distinguish beacons that are generated from a specific public-key $P$. Experiment $\pcalgostyle{Exp}^{\pcalgostyle{eav}}_{\adv,\Pi}(n)$, as shown in Figure~\ref{fig:game}, formulates the security of our protocol $\Pi = (\pcalgostyle{GenKey},\pcalgostyle{GenBeacon},\pcalgostyle{RandBeacon},\pcalgostyle{Test})$ with security parameter $n$, under an eavesdropping adversary $\adv$. We say that $\adv$ succeeds in this experiment if
$$\text{Pr}[\pcalgostyle{Exp}^{\pcalgostyle{eav}}_{\adv,\Pi}(n)=1] \le \frac{1}{2}+ \negl$$
Conversely, our scheme is secure if $\adv$ fails in this experiment.

\begin{figure}[ht]
\centering
\fbox{\begin{varwidth}{\dimexpr\columnwidth-3\fboxsep-3\fboxrule\relax}
\begin{center}
	\vspace{10pt}
\textbf{Beacon Indistinguishability\\Experiment} $\pcalgostyle{Exp}^{\pcalgostyle{eav}}_{\adv,\Pi}(n)$
\end{center}
\begin{enumerate}\itemsep6pt
    \item On input a security parameter $n$, \pcalgostyle{GenKey} generates two private keys $x_0,x_1$ and the corresponding public keys $P_0,P_1$; the public keys are given to $\adv$.
    \item $\adv$ calls \pcalgostyle{GenBeacon} an arbitrary number of times.
	\item The challenger selects a random bit $b\sample \bin$.
	\item If $b=0$, the challenger uses \pcalgostyle{GenBeacon} to generate a beacon $\mathcal{C}_0$, under public-key $P_0$. If $b=1$, the challenger generates $\mathcal{C}_1$ under public-key $P_1$. 
	\item $\adv$ continues calling \pcalgostyle{GenBeacon} and eventually outputs a bit $b'$.
	\item Return 1 if $b=b'$ and 0 otherwise.
	\vspace{10pt}
\end{enumerate}
\end{varwidth}
}
\caption{Beacon indistinguishability experiment.}
\label{fig:game}
\end{figure}

\subsection{Construction and Security Proof}
\label{sec:construction}
Assume an elliptic curve group $\GG$ of prime order $q$ and let $G$ be a generator of $\GG$. This information is public and shared by all users. Also, $n$ is the security parameter that determines the precise construction of the group $\GG$. Our protocol $\Pi$ is instantiated as follows.
\begin{enumerate}\itemsep6pt
    \item \pcalgostyle{GenKey}: On input a security parameter $n$, choose a uniformly random private key $x\in \ZZ^*_q$ and set the public-key $P = xG$.
    \item \pcalgostyle{GenBeacon}: On input a public-key $P$, choose a uniformly random $r\in \ZZ^*_q$ and output beacon $\mathcal{C} = (rG,rP)$.
    \item \pcalgostyle{RandBeacon}: On input a beacon $\mathcal{C} = (rG,rP)$, choose a uniformly random $r'\in \ZZ^*_q$ and output $\mathcal{C}' = (r'rG,r'rP) = (r''G,r''P)$.
    \item \pcalgostyle{Test}: On input a beacon $\mathcal{C} = (rG,rP)$ and a private key $x$, compute $A=xrG$. If $A=rP$, output \textit{true}; otherwise, output \textit{false}.
\end{enumerate}

It can be easily verified  that the protocol is \textit{correct}, i.e., if algorithm \pcalgostyle{Test} returns \textit{true}, the input beacon is generated under public-key $P$, since $xG=P$. The following theorem proves the security of our scheme.

\begin{theorem}
There is no polynomial-time adversary $\adv$ that succeeds in $\pcalgostyle{Exp}^{\pcalgostyle{eav}}_{\adv,\Pi}(n)$.
\end{theorem}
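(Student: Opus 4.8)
The plan is to recognize the beacon $\mathcal{C} = (rG, rP)$ as an ElGamal encryption of the point at infinity $\mathcal{O}$ under public key $P$ with fresh randomness $r$ (since $\mathrm{Enc}_P(\mathcal{O}) = (rG,\mathcal{O}+rP) = (rG,rP)$), so that $\pcalgostyle{Exp}^{\pcalgostyle{eav}}_{\adv,\Pi}(n)$ is exactly a \emph{key-privacy} (anonymity) game for ElGamal. First I would argue that the $\pcalgostyle{GenBeacon}$ calls — and, accounting for the randomized published beacons from the threat model, the $\pcalgostyle{RandBeacon}$ outputs — give $\adv$ nothing it cannot produce itself: $\pcalgostyle{GenBeacon}(P)$ uses only the public $P$, and for any beacon $(rG,rP)$ the value $\pcalgostyle{RandBeacon}(rG,rP) = (r''G,r''P)$ with $r''=r'r$ is distributed exactly as a fresh $\pcalgostyle{GenBeacon}(P)$ output, because $r''$ is uniform in $\ZZ_q^*$ whenever $r$ or $r'$ is. Hence the whole experiment collapses to: given $P_0,P_1$, distinguish a single challenge beacon $(rG,rP_b)$ for a hidden bit $b$. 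The proof will rely on the Decisional Diffie--Hellman (DDH) assumption in $\GG$ — that $(G,aG,bG,abG)$ is computationally indistinguishable from $(G,aG,bG,cG)$ for uniform $a,b,c$ — which is the standard decisional assumption for the elliptic curves used in practice; note the search-type \ac{ECDLP} assumption alone does not suffice for an indistinguishability claim (e.g., on a pairing-friendly curve one could break beacon indistinguishability while \ac{ECDLP} still holds).

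Next I would run a hybrid argument through an intermediate ``random'' challenge $(U,Z)$ with $U,Z$ uniform in $\GG$. The key claim is that a challenge $(rG,rP_0)$ is indistinguishable from $(U,Z)$: an adversary separating them yields a DDH distinguisher that, on input $(G,A,B,C)$, sets $P_0 := A$, samples $x_1 \sample \ZZ_q^*$ and sets $P_1 := x_1 G$, gives $(P_0,P_1)$ to $\adv$, answers all $\pcalgostyle{GenBeacon}$ calls directly (they are public), presents $(B,C)$ as the challenge beacon, and outputs $\adv$'s bit. When $C = abG$ this perfectly simulates the $b{=}0$ case; when $C$ is uniform it simulates the random challenge; thus $\adv$'s gap is bounded by the DDH advantage, hence $\negl$. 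By the symmetric reduction (embedding the DDH instance into $P_1$ and generating $x_0$ honestly), a challenge $(sG,sP_1)$ is likewise indistinguishable from $(U,Z)$.

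Finally I would stitch the pieces together. Writing
$$\Pr[\pcalgostyle{Exp}^{\pcalgostyle{eav}}_{\adv,\Pi}(n)=1] = \tfrac12 + \tfrac12\big(\Pr[\adv\to 1\mid b{=}1] - \Pr[\adv\to 1\mid b{=}0]\big),$$
I would bound the difference by inserting the random hybrid and applying the triangle inequality, each leg being $\negl$ by the two reductions above, to conclude $\Pr[\pcalgostyle{Exp}^{\pcalgostyle{eav}}_{\adv,\Pi}(n)=1] \le \tfrac12 + \negl$, which contradicts a successful $\adv$. The only delicate point — the main obstacle — is keeping the reduction faithful even though $\adv$ sees \emph{both} public keys: this is handled by having the DDH reduction generate the ``other'' key honestly with a known exponent, so it can still answer everything $\adv$ asks; a minor statistical discrepancy (the DDH components range over all of $\GG$, including $\mathcal{O}$, whereas honest beacons use $r\in\ZZ_q^*$, and $P_0=P_1$ occurs only with probability $1/q$) is absorbed into the $\negl$ term.
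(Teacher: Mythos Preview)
Your argument is correct and takes a more rigorous route than the paper. The paper's own proof appeals only to the ECDLP assumption: it notes that $r,x_0,x_1$ cannot be extracted from $rG,x_0G,x_1G$ and then asserts that $rx_0G$ and $rx_1G$ therefore ``appear as random elements in $\GG$,'' hence are indistinguishable. That step silently upgrades a search assumption to a decisional one; your observation about pairing-friendly curves (where ECDLP is hard but a bilinear map immediately tests whether the second component pairs correctly with $P_0$ or $P_1$) is exactly the separation showing ECDLP alone cannot justify the claim. By contrast, you recognize the experiment as the key-privacy (IK-CPA) game for ElGamal encrypting $\mathcal{O}$, invoke DDH explicitly, and run a standard two-sided hybrid through a uniform challenge $(U,Z)$, with reductions that embed the DDH instance into one public key while sampling the other honestly so that all $\pcalgostyle{GenBeacon}$ queries remain simulable. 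This yields an actual reduction with a concrete bound (twice the DDH advantage plus an $O(1/q)$ statistical slack), whereas the paper's argument is heuristic. The cost is that you introduce an assumption the paper never formally states; the benefit is that under that assumption the theorem genuinely follows.
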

\begin{proof}
The only information that the adversary has is a beacon $\mathcal{C} = (rG,rP)$. If $\adv$ succeeds in the experiment, it means that it can distinguish between $rx_0G$ and $rx_1G$ with non-negligible probability. Note that, it is infeasible to derive any of the secret values $r, x_0, x_1$, due to the \ac{ECDLP} assumption. Since $r, x_0, x_1$ are uniformly random in $\ZZ^*_q$, both $rx_0G$ and $rx_1G$ would appear as random elements in $\GG$. The probability of distinguishing between any two random group elements is negligible in $\log{q}$, where $q$ is the order of $\GG$ and $\log{q}$ is the order of the security parameter $n$. Since $\adv$ is polynomial-time, the experiment succeeds with probability $\negl$ larger than a random guess, which concludes our proof.
\end{proof}

\textcolor{black}{As shown in Fig.~\ref{fig:protocol}, to construct the beacon at time slot $k$, user $i$ chooses a random value $r_k$, such that $r_k < q$, and computes the tuple $\langle r_kG, r_kP_i\rangle$ 
that is broadcast by user $i$ at time slot $k$.} 
\textcolor{black}{As depicted in Fig.~\ref{fig:protocol_verify}, when a user $j$ downloads a new contact list, he must check whether any of his own beacons are included in that list. This is  done as follows: for every tuple $\langle rG, rP\rangle$ in the contact list, the user multiplies the first term with his own secret key $x_j$. If the result is equal to the second term, the user confirms that he is the owner of that particular beacon. It is important to note that the user cannot compare any other beacon against his own list and, thus, can not deduce any information about the person(s) in the published list.}

\begin{figure}[ht!]
    \centering
    \begin{tikzpicture}[node distance=4cm, people/.style={minimum width=1.0cm}, auto]
    \node[people, alice] (alice) {User $i$};
    \node[people, bob, right=of alice] (bob) {User $j$};
    \draw[] ([yshift=-1cm]alice.south) coordinate (l1)--(l1-|alice) node[midway, above]{$(x_i, P_i = x_iG)$};
    \draw[] ([yshift=-1.5cm]alice.south) coordinate (l1)--(l1-|alice) node[midway, above]{Choose $r_k <q$};
    \draw[] ([yshift=-1cm]bob.south) coordinate (l1)--(l1-|bob) node[midway, above]{$(x_j, P_j = x_jG)$};
    \draw[->,dashed] ([yshift=-2.5cm]alice.south) coordinate (l1)--(l1-|bob) node[midway, above]{$(r_kG, r_kP_i)$};
    \draw [black, ->] ([xshift=-1.5cm] 0,-1) -- ([xshift=-1.5cm] 0,-3) node [midway, rotate=90, fill=white, yshift=2.5pt, above] {time slot $k$};
    \end{tikzpicture}
    \caption{Sequence Diagram of the \proto\ scheme.}
    \label{fig:protocol}
\end{figure}
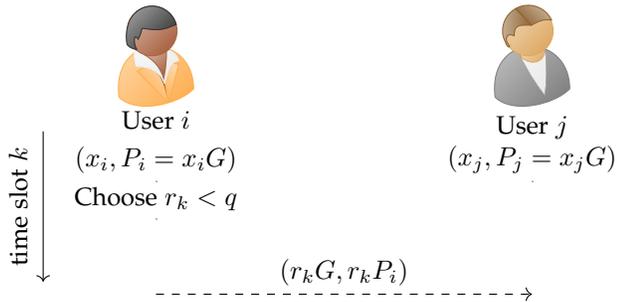

\begin{figure}[ht!]
    \centering
    \begin{tikzpicture}[node distance=4cm, people/.style={minimum width=1.0cm}, auto]
    \node[] (cl) {\includegraphics[width=.08\textwidth]{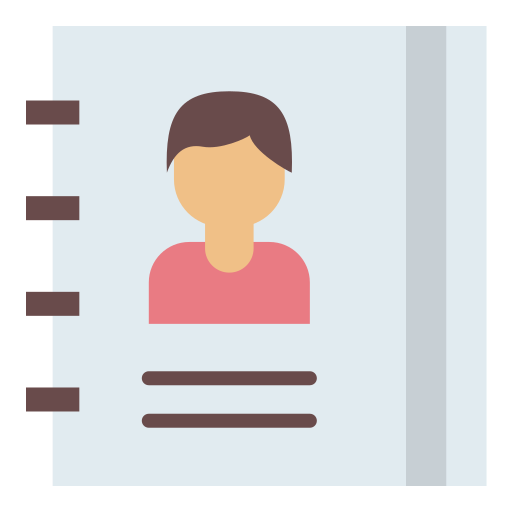}};
    \node[people, bob, right=of cl] (bob) {User $j$};
    \draw[] ([yshift=-0.23cm]cl.south) coordinate (l1)--(l1-|cl) node[midway, above]{Contact List};
    \draw[] ([yshift=-0.85cm]cl.south) coordinate (l1)--(l1-|cl) node[midway, above]{$(rG,rP)$};
    \draw[] ([yshift=-1cm]bob.south) coordinate (l1)--(l1-|bob) node[midway, above]{$(x_j, P_j = x_jG)$};
    \draw[->,dashed] ([yshift=-1.5cm]cl.south) coordinate (l1)--(l1-|bob) node[midway, above]{$(rG, rP)$};
     \draw[] ([yshift=-2.6cm]bob.south) coordinate (l1)--(l1-|bob) node[midway, above]{$x_jrG  \stackrel{?}{==} rP$};
    \end{tikzpicture}
    \caption{\proto\ check Contact List procedure.}
    \label{fig:protocol_verify}
\end{figure}

\subsection{Mitigating Active Attacks}
\label{sec:active}
The downside of using public-key cryptography is that it facilitates impersonation attacks. That is, if an adversary has knowledge of a user's public key, he can trivially generate valid beacons on his behalf. Even worse, knowledge of a user's public key is not really necessary; instead, the adversary may intercept a single beacon and then use algorithm \pcalgostyle{RandBeacon} to generate new, valid beacons at will. In addition to impersonation attacks, it is also essential to protect against other active attacks, including replay and relay attacks. 

To this end, we propose a simple solution, based on timestamps, localization data, and digital signatures, to mitigate such attacks. First, we assume that the users' mobile devices are synchronized to within a few seconds and every device knows its approximate location (longitude and latitude). Then, every beacon is transmitted with an attached Unix timestamp, which represents the current time, and the user's current coordinates. In addition, the beacon includes a digital signature that is computed as follows.
\begin{enumerate}
    \item Let $\mathcal{C} = (rG,rP)$ be the current beacon, let $T$ be the corresponding Unix timestamp, and let $L$ be the user's current location.
    \item Let $d = rx \bmod q$ be the private key associated with public-key $P' = rP = rxG$, i.e., the second term of the beacon.
    \item Use the \ac{ECDSA} algorithm to generate the signature $\sigma$ of message $L||T||rG||rP$, under private key $d$.
    \item Broadcast beacon $\mathcal{C} = (L,T,rG,rP,\sigma)$.
\end{enumerate}

At the receiver side, the device will first verify that $T$ is within the synchronization threshold and $L$ sufficiently close. Then, it will use the public-key $P' = rP$ to verify signature $\sigma$. If any of these tests fail, the beacon will be discarded. As such, any attempt from an adversary to re-use that beacon (or a randomized version of it) at subsequent timestamps (or remote locations) will fail. Note that, to save valuable resources, it is not necessary to verify a signature as soon as it is received. Recall that a beacon is only stored in the contact list if it has been received multiple times over a sufficiently long time interval. Therefore, the device can simply defer the signature verification process until a beacon is inserted into the contact list.

\subsection{Security and Privacy Analysis}
\label{sec:properties}
We now discuss the security and privacy characteristics of our protocol, with respect to Table~\ref{tab:solutions_comparison}.
In terms of privacy, \proto\ has several clear advantages over the current state-of-the-art approaches. First, it is immune to eavesdropping attacks, because the published information (from infected users) is randomized and, thus, cannot be linked to any previously intercepted beacon by an adversary. As such, \proto\ also protects the privacy of the users who have been positively diagnosed. Second, the infected user never reveals his/her own beacons and, therefore, cannot be traced inside a malicious user's contact list, which may include detailed timestamp and location information for each contact. More importantly, the published contact lists are permuted and do not include any identifiable information besides the two elliptic curve points (i.e., the signatures and all metadata are removed). As a result, even if an adversary matches a beacon inside the published contact list, it is impossible to tie that beacon to a specific location and timestamp. Consequently, our protocol is very resilient against linkage attacks. Finally, by attaching a digital signature in every transmitted beacon, \proto\ is the first protocol in the literature that mitigates both replay and relay attacks.

\section{Performance Evaluation}
\label{sec:performance}
\textcolor{black}{In this section we provide an experimental evaluation of the \proto\ protocol. It is worth noticing that, compared to the aforementioned solutions proposed in the literature, \proto\ is not the most 
computational 
or energy efficient solution---while still being largely viable. But, \proto\ sports provable security and privacy features unmatched by competing solutions.
Indeed, we want to remark that \proto\ is  a trade-off approach that mitigates the main security and the privacy issues that emerged by our analysis of the existing contact tracing solutions, at the expense of a slight increase in the energetic and computational overhead. 
Further possible performance improvements 
are left as future work.}

We evaluated the cryptographic component of \proto\ on an LG Google Nexus 5X, equipped with a hexa-core $64$-bit CPU ($4\times1.4$~GHz Cortex-A53 and $2\times1.8$~GHz Cortex-A57) ARMv8-A. The wireless communications module was evaluated on a Qualcomm Atheros QCA6174A SoC hardware platform with built-in IEEE 802.11ac radio, Bluetooth v4.2, and \acl{BLE}. The device supports the standard 256-QAM modulation and utilizes $1,216$~KB RAM and $448$~KB ROM for Wi-Fi, and $192$~KB RAM and $672$~KB ROM for Bluetooth~\cite{qualcomm_chip_radio}.

In particular, the cryptographic operations of the \proto\ protocol where implemented in \emph{C++}, with the well known \emph{OpenSSL 1.1.1d} C library~\cite{openssl}, using Android Native Development Kit (NDK) with \emph{Android 8.1 Oreo} OS~\cite{android}. 
For the experimental evaluation, we selected four elliptic curves, i.e., \pcalgostyle{secp128r1,\ secp160r1,\ secp192k1}, and \pcalgostyle{secp256k1}. According to NIST's latest guidelines, these curves provide security levels of $64$, $80$, $96$, and $128$~bits, respectively~\cite{Barker2020}. Furthermore, following NIST's recommendations, we adopted the \pcalgostyle{SHA256} hash function for the digital signatures. We considered the following two performance metrics for our protocol: (i) CPU time for the beacon generation function; and, (ii) energy consumption for the computations and TX/RX operations.

Specifically, to estimate the energy consumption due to the public-key cryptographic primitives, we focused on the protocol's basic operation---the elliptic curve point-scalar multiplication---which is by far the most CPU intensive operation. To this end, we leveraged the power profile component of the Android OS that outputs the current consumption values for sensors and CPUs and the approximate battery drain caused by the component over time. The power profile is provided by the device manufacturer~\cite{android_power_values}. After performing the scalar-point multiplication, we estimated the current drained at $105.87$~mA. Therefore, we concluded that this value is the instantaneous current drained by the CPU to perform the elliptic curve operation. Additionally, we measured the battery voltage at $3.9$~V, using a common battery monitoring application.

Fig.~\ref{fig:ecc_time} illustrates the CPU time that is required to perform a single scalar-point multiplication for various security levels. At $80$ bits security, the operation takes $1.1196$~ms to complete and it consumes $\approx 0.4623$~mJ of energy. When we require $128$ bits security, the cost is approximately doubled, i.e., $2.4030$~ms of CPU time and $\approx 0.9922$~mJ of consumed energy. The tests related to the CPU time were repeated $5,000$ times and, in Fig.~\ref{fig:ecc_time}, we report the mean value and the $95\%$ confidence intervals.

\begin{figure}[htbp]
    \centering
    \includegraphics[width=\columnwidth]{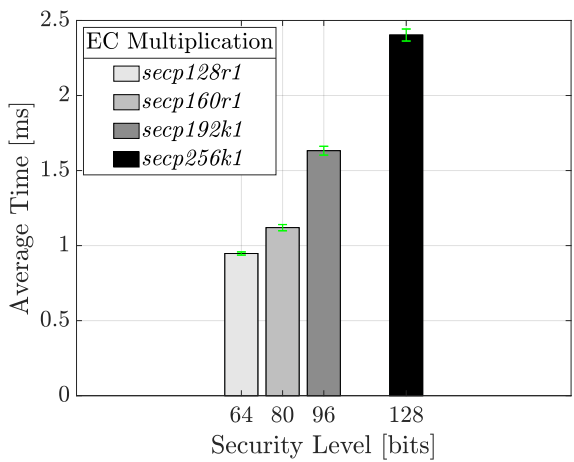}
    \caption{Average CPU time for elliptic curve point-scalar multiplication.}
    \label{fig:ecc_time}
\end{figure}


To measure the energy consumption and transmission time during the beacon broadcast operation, we assumed the Bluetooth v4.2 standard, where the data transmission rate is $1$~Mbps. Nevertheless, given the multipath and slow fading effects that negatively affect RF communications (due to scattering, reflection, and diffraction), we adopted a conservative stance and assumed that the maximum throughput is $0.8$~Mbps. We evaluated \proto\ at the MAC layer, where the Bluetooth v4.2 standard sets the maximum frame size at $265$ bytes (using packet length extension), which translates to a maximum payload size of $251$ octets \cite{bluetooth_info}. Thus, if the protocol needs to transmit larger payloads, they must be fragmented. Note that, the amount of transferred data is related to the desired security level, i.e., the communication cost increases for more secure elliptic curve groups. However, one technique to reduce the payload size is to employ point compression, where only one coordinate is transmitted for each elliptic curve point. 

The energy consumption of the TX and RX operations for the Qualcomm Atheros QCA6174A SoC module can be computed from the area underlying the current consumption curve, as depicted in Eq.~\ref{eq:energy}.
\begin{equation}
    \label{eq:energy}
    E [mJ] = 1.8V \cdot \int_{0}^{\tau} i(t)dt
\end{equation}
Here, $E$ is the energy consumption (measured in mJ), $i(t)$ is the instantaneous current drain (in mA), $\tau$ is the operation duration, and $1.8$~V is the minimum voltage of the QCA6174A board (in Volts). This translates to $\approx46$~mA in TX mode and $\approx42$~mA in RX mode, for the Bluetooth v4.2 protocol.

Fig.~\ref{fig:test_time_energy_consumption} illustrates the total energy consumption and the time required to complete one beacon exchange under the \proto\ protocol. First, the energy consumption ranges from a minimum of $2.552$~mJ to a maximum of $6.260$~mJ, based on the underlying security level. Similarly, for $64$ bits security, the average time to complete the exchange is $8.274$~ms, where $5.6837$~ms are required for the cryptographic operations (i.e., beacon generation, and signature generation and verification), while $2.5903$~ms are devoted to the exchange of the crypto material. For $128$ bits security, the time to complete the protocol increases to $18.288$~ms, where $14.418$~ms are devoted to the cryptographic computations and $3.87$~ms are related to the exchange of the crypto material. We emphasize that the overall duration of the beacon exchange can be affected by the configuration of the operating system features at the MAC link-layer, and also various RF phenomena at the PHY layer.

\begin{figure}[htbp]
    \centering
    \includegraphics[angle=0, width=\columnwidth]{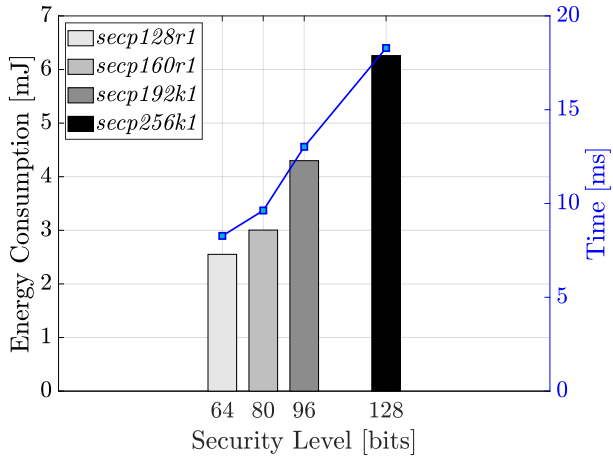}
    \caption{Total time and energy consumption (TX, RX, and Idle Status) of \proto\ on the LG Google Nexus 5X smartphone.}
    \label{fig:test_time_energy_consumption}
\end{figure}

Assuming the standard-compliant MAC frames in Bluetooth v4.2, Table~~\ref{tab:comparison} summarizes the different cost associated with a single beacon exchange in the \proto\ protocol. The results confirm that, even at the highest security level, \proto\ is feasible for resource-constrained smart devices. Indeed, the battery capacity of the LG Google Nexus 5X smartphone is $41,796$~J ($2,700$~mAh) so, for $128$ bits security, our protocol consumes $\approx 1.5\cdot10^{-7}$\% of the battery capacity. Also note that the payload size is well within the standard's limit, so beacons can be transmitted within a single MAC frame.

\begin{table}[htbp]
\color{black}
\caption{\proto\ cost summary for one beacon exchange.
}
\centering
\begin{tabular}{|c|c|c|c|c|}
\hline
\multirow{2}{*}{\textbf{Feature}} & \multicolumn{4}{c|}{\textbf{Security Level (bits)}} \\ \cline{2-5} 
 & \textbf{\makecell[c]{$\mathbf{64}$}} & \textbf{\makecell[c]{$\mathbf{80}$}} & \textbf{\makecell[c]{$\mathbf{96}$}} & \textbf{\makecell[c]{$\mathbf{128}$}} \\ \hline
\textit{Payload Size (B)} & $68$ & $84$ & $100$ & $132$ \\ \hline
\textit{Energy Consumption (mJ)} & $2.552$ & $3.004$ & $4.299$ & $6.260$ \\ \hline
\textit{Time Duration (ms)} & $8.274$ & $9.628$ & $13.023$ & $18.288$  \\ \hline
\end{tabular}
\label{tab:comparison}
\end{table}

\section{Discussion}
\label{sec:discussion}
Our experimental evaluation detailed in the previous section illustrates the feasibility of public-key cryptography in the context of contact tracing. Compared to the state-of-the-art symmetric key approaches, such as Apple/Google, \proto\ shows a more sustained overhead, though compensated from much stronger security and privacy guarantees. In our evaluation, we focused on the beacon exchange process, which is the most frequently performed operation, like for the vast majority of contact tracing apps. 
However, one operation where the performance of \proto\ is orders of magnitude more expensive with respect to other solutions is the exposure notification operation. Though, it should be noted that this operation is invoked much less frequently, and could be easily outsourced, as discussed later. 

In detail, in the event of a positive diagnosis, the app has to re-randomize all beacons in its contact list prior to uploading them to the centralized server. Assuming an average of $200$ significant contact events per day, over a period of $14$ days, this amounts to $2,800$ entries. Recall that the \pcalgostyle{RandBeacon} function necessitates two point-scalar multiplications and, based on the results of Fig.~\ref{fig:ecc_time}, the entire process may consume (for $128$ bits security) up to $13.44$~s of computing time and $5.55$~J of energy. 
Additionally, the communication cost to upload the beacons is $179.2$~KB. Again, these costs may be dramatically reduced: cut by half if we choose $80$ bits security, or completely removed if we resort to outsourcing, as discussed in the following.


An even more resource-demanding operation is the actual contact tracing, where individual devices have to identify possible contagion events, after downloading the latest contact lists coming from recently diagnosed patients. Assuming that this operation is performed on a daily basis, the typical number of beacons that need to be matched would be in the order of millions (e.g., $5.6$ million, if there are $2,000$ new cases with $2,800$ stored beacons each). In this example, the smartphone would consume (for $128$ bits security) approximately $3.7$~h of computing time and $5,556$~J of energy. There is also a communication cost of $360$~MB to download the beacons from the centralized server. Again, these costs may be dramatically reduced: cut by half if we choose $80$ bits security, or completely removed if we resort to outsourcing, as discussed in the following.

Despite the highlighted  cost, we strongly believe that \proto's superior security and privacy properties offset any argument regarding its performance. More importantly, the cited costs are easily mitigated by separating the \textit{online} and \textit{offline} components of the protocol. Specifically, it is worth noting that all the point-scalar multiplications in the beacon generation process can be performed offline. There are a total of three such operations: two for computing the beacon itself and one for computing the digital signature (which is input-independent). Therefore, we can pre-compute offline a large number of random elliptic curve points that may be used during the online phase (beacon exchange) by the device. Consequently, the beacon generation process can be reduced to computing just one hash function and two integer multiplications modulo $q$. Both are very cheap operations---involved in the computation of the digital signature. 

Similarly, all the remaining expensive operations (contact list maintenance and contact tracing) may be safely moved into the offline module. Contact list maintenance involves the verification of all signatures for the beacons that are inserted into the list (to detect replay/relay attacks), and the re-randomization of the accepted beacons in order to minimize the online computational cost in the event of a positive diagnosis. More importantly, contact tracing is also independent of the protocol's everyday operations and may be performed in an offline fashion. To this end, we propose the following two approaches for handling the offline operations:
\begin{itemize}
    \item \textbf{Night mode:} The offline work will be performed during night time, when the smartphone is charging and connected to the home WiFi network.
    \item \textbf{Desktop app:} A companion desktop app will be developed to handle the offline tasks. The user will periodically sync the smartphone with the desktop app, in order to upload and download the necessary information. Note that this one would be the preferred solution, as modern desktop CPUs can perform the cryptographic operations significantly faster. Also, the app will implement multi-threading to take advantage of the multiple CPU cores, since all the protocol's operations are highly parallelizable.
\end{itemize}





\section{Conclusion}
\label{sec:conclusion}
In this paper, we have provided two main  contributions in the domain of digital contact tracing.
We first 
performed a thorough evaluation of the privacy and security characteristics of the main contact tracing apps, focusing on their basic mechanisms. 
Our results showed that the most prominent solutions fail to protect the privacy of positively diagnosed individuals, as well as being 
vulnerable to a variety of active and passive attacks. 
To cope with the above-highlighted shortcomings, our second  contribution was the design of \proto, a novel, provably secure contact tracing protocol that protects the privacy of \textit{all} users while at the same time being resilient against most passive and active attacks, including eavesdropping and replay/relay attacks. Although \proto\ is based on public-key cryptographic primitives that are computationally expensive, we experimentally showed that the induced overhead 
can be easily handled by modern smartphones. 
In addition, we have demonstrated that all resource-intensive operations can be safely moved into an offline module, thus rendering \proto's online proximity tracing task extremely lightweight. Given the strong and provable security and privacy properties, combined with the experimentally proved sustainable overhead, 
we argue that \proto\ is the ideal candidate for digital contact tracing. Finally, the open source nature of this project will also pave the way for further solutions in the domain.




\section*{Acknowledgements}
\label{sec:ack}
This publication was partially supported by awards NPRP11S-0109-180242
from the QNRF-Qatar National Research Fund, a member of The Qatar Foundation. The information and views set out in this publication are those of the authors and do not necessarily reflect the official opinion of the QNRF.

\bibliographystyle{IEEEtran}
\bibliography{tracing}
\section*{Biographies}
\noindent

\begin{IEEEbiography}[{\includegraphics[width=1in,height=1.25in,clip,keepaspectratio]{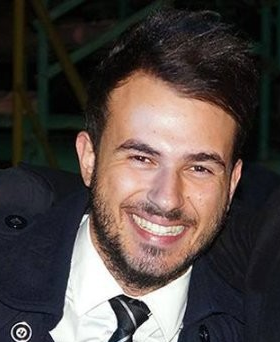}}]{Pietro Tedeschi} is currently a PhD Student in Computer Science and Engineering (Cybersecurity) at the Hamad Bin Khalifa University (HBKU), Doha, Qatar. He is an active member of the HBKU Cyber-Security Research Innovation Lab. He received his Bachelor's degree in Computer and Automation Engineering in 2014 with a thesis on the Analysis of Security Protocols for the Internet of Things, in IEEE 802.15.4e Networks, and his Master's degree (with honors) in Computer Engineering both from the ``Politecnico di Bari``, in 2017 with a thesis on the Development of Security Architectures in Intelligent Transport Systems for EU Horizon 2020 BONVOYAGE project. From 2017 to 2018, he worked as Security Researcher at CNIT (Consorzio Nazionale Interuniversitario per le Telecomunicazioni), Italy, for the EU H2020 SymbIoTe project. His research interests span over UAV/Drone Security, Wireless Security, Internet of Things (IoT), Applied Cryptography, and Cyber-Physical Systems.
\end{IEEEbiography}

\begin{IEEEbiography}[{\includegraphics[width=1in,height=1.25in,clip,keepaspectratio]{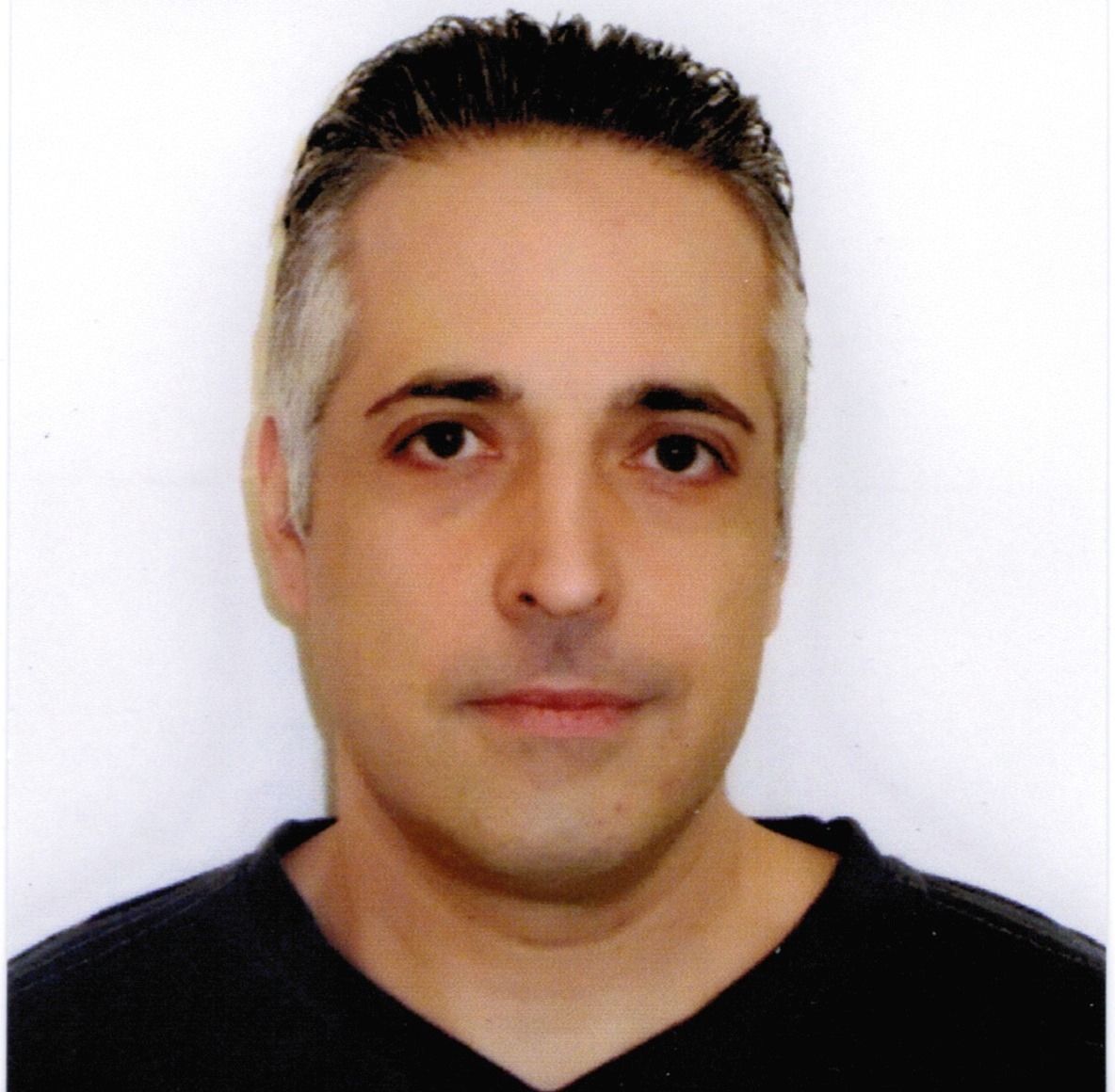}}]{Spiridon Bakiras} received the B.S. degree in electrical and computer engineering from the National Technical University of Athens, in 1993, the M.S. degree in telematics from the University of Surrey, in 1994, and the Ph.D. degree in electrical engineering from the University of Southern California, in 2000. He is currently an Associate Professor with the College of Science and Engineering, at Hamad Bin Khalifa University, Qatar. Before that, he held teaching and research positions at Michigan Technological University, The City University of New York, The University of Hong Kong, and The Hong Kong University of Science and Technology. His current research interests include database security and privacy, mobile computing, and spatiotemporal databases. He is a member of the ACM and IEEE, and a recipient of the U.S. National Science Foundation (NSF) CAREER Award. 
\end{IEEEbiography}

\begin{IEEEbiography}[{\includegraphics[width=1in,height=1.30in,clip,keepaspectratio]{./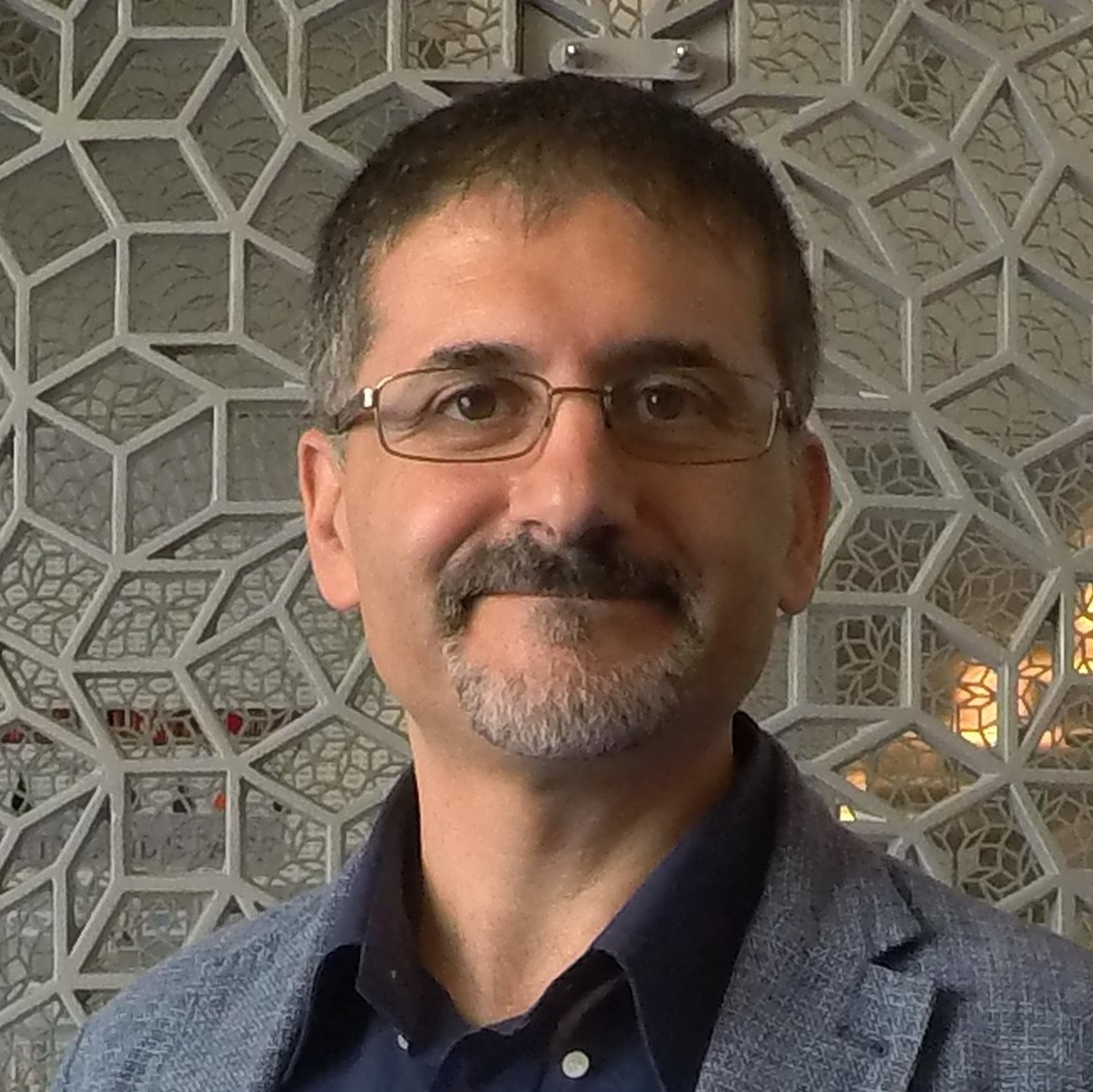}}]{Roberto Di Pietro, } 
 ACM Distinguished Scientist, is Full Professor in Cybersecurity at HBKU-CSE. Previously, he was in the capacity of Global Head Security Research at Nokia Bell Labs, and Associate Professor (with tenure) of Computer Science at University of Padova, Italy. He also served 10+ years as senior military technical officer. Overall, he has been working in the cybersecurity field for 23+ years, leading both technology-oriented and research-focused teams in the private sector, government, and academia (MoD, United Nations HQ, EUROJUST, IAEA, WIPO). His main research interests include security and privacy for wired and wireless distributed systems (e.g. Blockchain technology, Cloud, IoT, On-line Social Networks), virtualization security, applied cryptography, computer forensics, and data science. 
Other than being involved in M\&A of start-up---and having founded one (exited)---, he has been producing 230+ scientific papers and patents over the cited topics, has co-authored three books, edited one, and contributed to a few others. 
He is serving as an AE for ComCom, ComNet, PerCom, Journal of Computer Security, and other Intl. journals. 
In 2011-2012 he was awarded a Chair of Excellence from University Carlos III, Madrid. In 2020 he received the Jean-Claude Laprie Award for having significantly influenced the theory and practice of Dependable Computing. 
\end{IEEEbiography}

\end{document}